\newcolumntype{L}[1]{>{\raggedright\arraybackslash}p{#1}} 
\newtheorem{Proposition}{Proposition} 
\title{Quantile estimation of CO$_2$ marginal abatement cost across emission-generating technologies}
\author[a]{Haleh Delnava}
\author[b,\footnote{Corresponding author. \newline
\hspace*{5mm} \textit{E-mail addresses:} \texttt{halehh.del@gmail.com (H. Delnava)}, \texttt{sheng.dai@zuel.edu.cn (S. Dai)}.}]{Sheng Dai}
\affil[a~]{Institute of Manufacturing Information \& System, National Cheng Kung University, Taiwan}
\affil[b~]{School of Economics, Zhongnan University of Economics and Law, 430073 Wuhan, China}
\date{\today}
\begin{document}
\captionsetup[figure]{labelfont={bf},labelformat={default},labelsep=period,name={Fig.}}
\captionsetup[table]{labelfont={bf},labelformat={default},labelsep=period,name={Table}}

\maketitle
 
\vfill

\vfill

\begin{abstract}
\noindent 
Marginal abatement cost (MAC) is a critical metric for designing efficient and cost-effective mitigation policies. However, existing MAC estimates are typically derived under different assumptions about emission-generating technologies, yet few studies have systematically compared these technologies. Moreover, conventional estimators often exhibit biases arising from limited abatement options, production inefficiencies, and data noise. To address these limitations, this paper analyzes the abatement behavior of three emission-generating technologies: by-production, joint disposability, and weak G-disposability, each consistent with the material balance principle. We employ both full and quantile frontier estimation methods to identify optimal abatement strategies. Using data from U.S. coal-fired power plants in 2022, the empirical results suggest that reducing electricity output, rather than cutting emission-generating inputs such as fossil fuels, provides a more cost-effective mitigation pathway. Furthermore, Monte Carlo simulations demonstrate that the quantile estimator consistently delivers more accurate results than the full frontier estimator.
\\[5mm]
\noindent{{\bf Keywords}: Data envelopment analysis, Marginal abatement cost, Emission-generating technologies, Quantile estimation, Monte Carlo simulation}
\end{abstract}
\vfill

\thispagestyle{empty}

\newpage
\setcounter{page}{1}
\setcounter{footnote}{0}
\pagenumbering{arabic}
\baselineskip 20pt

%

\section{Introduction}\label{sec:intro}

Mitigation policies are generally required to balance environmental objectives with economic efficiency to achieve sustainable development \citep{Wu2023, Zhang2025}. The marginal abatement cost (MAC) plays a pivotal role in this balance by measuring the opportunity cost of reducing undesirable outputs relative to forgone desirable outputs. Accordingly, MAC not only helps identify least-cost abatement pathways but also informs the efficient allocation of mitigation responsibilities. To estimate MAC, two principal approaches are widely adopted: economic and engineering methods, which are regarded as complementary \citep{Lee2014}. The economic approach integrates diverse abatement technologies by modeling joint production based on observed input–output data \citep{Aiken2003}. In contrast, the engineering approach analyzes individual abatement technologies separately via MAC curves.

In the field of production economics, a range of approaches have been developed to model emission-generating technologies for the estimation of shadow prices and MAC \citep{Dakpo2019}. Following \citet{Zhou2008a}, neoclassical models typically adopt one of two strategies: treating undesirable outputs as inputs \citep[see, e.g.,][]{Hailu2001, Considine2006, Mahlberg2011} or applying data transformations that convert undesirable outputs into desirable ones via a reverse function \citep{Scheel2001}. Alternatively, undesirable outputs can be modeled directly under the assumptions of weak disposability (WD; \citealp{Fare1985, Fare2005a}) or null-jointness. WD implies a trade-off between desirable and undesirable outputs, where a reduction in undesirable output necessitates a reduction in desirable output. Null-jointness assumes that undesirable outputs are jointly produced with desirable outputs and thus cannot be zero unless the desirable outputs are also zero \citep[see, e.g.,][]{Fare1996, Coggins1996a, Boyd1999}. However, comparative analyses of these modeling approaches remain relatively scarce in the existing literature.

In practice, parametric programming \citep[see, e.g.,][]{Fare2005a, Wei2013a}, data envelopment analysis (DEA) \citep{Fare2014}, and stochastic nonparametric envelopment of data (StoNED) \citep{Mekaroonreung2012a, Lee2018b} are commonly utilized to estimate emission-generating technologies. Parametric programming approaches impose a predefined functional form (e.g., translog or quadratic) for the production, cost, or distance function, allowing for the derivation of shadow prices through differentiability. However, the reliability of estimates relies on the correctness of the assumed functional form. DEA, by contrast, is a nonparametric estimation and avoids functional form assumptions, but it neglects stochastic noise \citep{Vardanyan2006}. The StoNED method offers a unified framework that relaxes functional form assumptions while accounting for statistical noise \citep{Kuosmanen2012c}. 

While widely used in empirical studies, these full frontier approaches may be affected by several underlying factors that tend to cause systematic overestimation of MAC \citep{Kuosmanen2020b}. First, shadow prices are typically computed on the efficient full frontier in conventional analyses, overlooking the heterogeneous efficiency levels of decision-making units (DMUs). Second, the cost-effectiveness of major abatement strategies, including production downscaling, the adoption of low-carbon fuels (e.g., clean hydrogen, sustainable biofuels), and investments in pollution control technologies (e.g., carbon capture, utilization, and storage), in reducing carbon footprints is often neglected. Third, conventional estimation methods are sensitive to statistical noise, outliers, and extreme values.

To avoid overestimation of shadow prices and MAC, \citet{Kuosmanen2020b} develop a data-driven convex quantile regression (CQR) approach to estimate quantile shadow prices. Unlike conventional full frontier estimation, CQR estimates shadow prices within the production possibility set using nearest quantiles, thereby explicitly accounting for inefficiency. Moreover, the CQR approach inherits robustness to noise and outliers from quantile regression \citep{Koenker2005b}. This framework has been applied in various empirical contexts. For example, \citet{Kuosmanen2020} evaluate the total abatement cost of the Kyoto Protocol for OECD countries. \citet{Dai2020} estimate the MAC of CO$_2$ for Chinese provinces and observe that, under existing targets, some provinces might experience economic gains from moderate emission increases, highlighting opportunities for improving climate policy design. \citet{Zhao2022} apply CQR to U.S. coal power plants, documenting rising shadow prices for CO$_2$, SO$_2$, and NOx between 2010 and 2017, largely driven by electricity market dynamics and emission reduction policies.

Furthermore, modeling emission-generating technologies needs to satisfy the material balance principle (MBP), as neglecting this principle may result in misguided policy decisions \citep{Ayres1969}. To assess the consistency of such technologies with the MBP, \citet{Coelli2007}, \citet{Forsund2009}, and \citet{Hoang2011} mathematically demonstrate that the production models proposed by \citet{Fare1989} violate the first law of thermodynamics, which governs the conservation of mass and energy. Therefore, the MBP should be explicitly incorporated into the estimation of MAC for undesirable outputs.

To address the challenges inherent in shadow prices and MAC estimation, the first contribution of this paper is to compare three representative emission-generating technologies: by-production \citep{Murty2012}, joint disposability \citep{Ray2018}, and weak G-disposability \citep{Rdseth2017, Hampf2015}. These technologies are consistently formulated within the MBP framework. Crucially, we adopt a unified by-production technology framework proposed by \citet{Shen2021}, which directly addresses the widely noted criticism that the conventional by-production model neglects the critical linkage between sub-technologies \citep{Lozano2015, Dakpo2017}. In addition, the analysis incorporates both full and quantile frontier estimation techniques, offering a more robust and accurate approach for identifying optimal abatement strategies.

The second methodological contribution is to propose three sign-constrained convex nonparametric least square (CNLS) models to characterize the by-production, joint disposability, and weak G-disposability technologies, respectively. The equivalences between the conventional DEA-based model and the sign-constrained CNLS model have been stated and proved. More importantly, we develop the quantile models to estimate three emission-generating technologies. The shadow prices are then locally estimated and are more robust to outliers and extreme values.  

Our third contribution lies in an empirical analysis of U.S. coal-fired power plants in 2022, showing that reducing electricity output is more cost-effective for emission reduction than lowering fossil fuel inputs. Monte Carlo simulations confirm the superiority of the quantile models through significantly lower root mean squared error values. We also validate the practical applicability of different emission-generating technologies. Moreover, by examining plants with the highest and lowest MACs, we reveal the technological, operational, and fuel-related factors underlying cost variation, highlighting the policy relevance of estimator choice.

The rest of this paper is organized as follows. Section~\ref{sec:pollut} reviews the theoretical relationships among the three primary emission-generating technologies. Section~\ref{sec:shadow} explains the estimation of shadow prices using both full and quantile estimators for these technologies and discusses the identification of the least-cost MAC strategy. An application to U.S. coal-fired power plants and the Monte Carlo simulation are demonstrated in Sections~\ref{sec:emp} and~\ref{sec:monte}. Section~\ref{sec:concl} concludes the paper with policy implications.

%

\section{Emission-generating technologies}\label{sec:pollut}

In the production theory, the free disposability of inputs and outputs is a neoclassical assumption. However, this assumption can be violated in the presence of the undesirable output. Several alternative disposability approaches have thus been proposed to model emission-generating technologies that yield both desirable and undesirable outputs simultaneously. In fact, an ideal emission-generating technology should satisfy all the following properties: 1) positive correlation between emissions and emission-generating inputs; 2) positive correlation between inputs and desirable outputs; 3) positive correlations between desirable and undesirable outputs. 

Table~\ref{tab:tab1} presents the different emission-generating technologies, which are consistent with the MBP. Regardless of the emission-generating technology employed (i.e., the second row of Table~\ref{tab:tab1}), MAC is calculated as the ratio of multipliers of desirable and undesirable outputs in the relevant dual linear programming model, with revenue maximization serving as the objective function. However, this ratio signifies a reduction in the production of desirable outputs necessary per unit decrease in the generation of undesirable outputs, assuming that the production unit is technically efficient. However, in the presence of technical inefficiency, this ratio cannot be reliably interpreted as the MAC because it does not accurately represent the trade-off between desirable and undesirable outputs. To address this issue, quantile frontier approaches incorporating specified emission-generating technologies are developed to capture the impact of inefficiency (i.e., the third row of Table~\ref{tab:tab1}; see more discussion in Section~\ref{sec:shadow}).
\begin{table}
    \centering
    \caption{MAC estimation using different emission-generating technologies.}
    \begin{tabular}{cccc}
    \toprule
                                 & By-production  & Weak G-disposability & Joint disposability\\
    \midrule
        Full frontier          & \citet{Murty2012}  & \citet{Hampf2015} & \citet{Ray2018}  \\
                      & \citet{Shen2021}  & \citet{Rodseth2023}  &                  \\
        Quantile frontier      & This paper & This paper & This paper \\
    \bottomrule
    \end{tabular}
    \label{tab:tab1}
\end{table}

Suppose there are $I$ DMUs indexed by $i$, each DMU consists of $M$ inputs, $J$ desirable outputs, and $K$ undesirable outputs. The input and output vectors are denoted by $x \in \real_+^M$, $y \in \real_+^J$, and $b \in \real_+^K$, respectively. The input vector is divided into two sub-components, $x=(x^N, x^P)$, where $x^N$ represents $M_1$ non-emission-generating inputs, and $x^P$ is $M_2$ emission-generating inputs. For the sake of simplicity, the same notation is used throughout this paper.

\subsection{By-production technology}

Let $\Ts_{BP}$ be the by-production technology \citep{Murty2012}, which is an intersection of two sub-technologies: the economic technology ($\Ts_1$) and the environmental technology ($\Ts_2$).
\begin{alignat}{2}
\Ts_{BP}= & \{(x^N, x^P, y, b) \in \real_+^{M+S+J}| \lambda X \le x, \lambda Y \geq y, \mathbbm{1}^T\lambda=1, \mu X^P \geq x^p, \mu B \le b, \notag \\
& \mathbbm{1}^T \mu=1; \text{ for } \lambda \in \real_+^i, \mu \in \real_+^i\} = \Ts_1 \cap \Ts_2, \notag 
\end{alignat}
\vspace{-0.5cm}where
\begin{alignat}{2}
   & \label{eq:eq1}  \Ts_1=\{(x,y) \in \real_+^{M+S}| \lambda X \le x, \lambda Y \geq y, \mathbbm{1}^T \lambda=1; \text{ for } \lambda \in \real_+^i\}  \\
   & \label{eq:eq2}  \Ts_2=\{ (x^P,b) \in \real_+^{M_2+J}| \mu X^P \geq x^p, \mu B \le b, \mathbbm{1}^T \mu=1; \text{ for } \mu \in \real_+^i\}
\end{alignat}

The sub-technology $\Ts_1$~\eqref{eq:eq1} defines economic production technology under the assumption of variable returns to scale (VRS). In this context, $\lambda$ denotes the intensity variables used in the convex combination of all inputs and desirable outputs. This sub-technology aligns with standard neoclassical disposability properties, specifically the free-disposability of both desirable outputs and all inputs.
$$(x^N, x^P, y, b) \in \Ts_1 \land ( \bar{x}^N \geq x^N \land \bar{x}^P \geq x^P) \land \bar{y} \leq y \Rightarrow (\bar{x}^N, \bar{x}^P, \bar{y}, b) \in \Ts_1.$$

The sub-technology $\Ts_2$~\eqref{eq:eq2} defines environmental (or residual-generating) technology under the assumption of VRS. Here, $\mu$ represents the intensity variables used in the convex combination of emission-generating inputs and undesirable outputs. It assumes reliance on the costly disposability of emissions and emission-generating inputs. Costly disposability of emissions suggests a minimum level of emissions corresponding to specific amounts of emission-generating inputs. Similarly, the costly disposability of these inputs (e.g., fuel) implies that for any given fixed level of emissions, there is a maximum amount of emission-generating inputs.
$$(x^N, x^P, y, b) \in \Ts_2 \land  \bar{x}^P \leq x^P \land \bar{b}^b \geq b \Rightarrow (x^N, \bar{x}^P, y, \bar{b}) \in \Ts_2.$$

\subsection{Weak G-disposability}

The weak G-disposability highlights the value of viewing technical interactions between resources, residuals, and good output, which is more flexible than conventional joint production of good and undesirable outputs (i.e., WD). This framework applies directional distance function (DDF) \citep{Chambers1996, Chambers1998b} to integrate mass/energy conservation with G-disposability \citep{Hampf2015,Rodseth2023}. Under the condition of weak G-disposability, the production possibility set is characterized by the following assumptions.
\begin{enumerate}[label=A\arabic*.]
    \item $\Ts_{WGD}$ is convex.
    \item \label{lab2} Output essentiality for the undesirable outputs: If $(x,y,b) \in \Ts_{WGD} \land b=0 \Rightarrow x^P=0$. 
    \item \label{lab3} Input essentiality for the undesirable outputs: If $(x,y,b) \in \Ts_{WGD} \land x^P=0 \Rightarrow b=0$. 
	\item \label{lab4} Inputs and outputs are weakly G-disposable: If $(x,y,b) \in \Ts_{WGD} \land ug_x+rg_y-g_b=0 \Rightarrow (x^p+g_x,y-g_y,b+g_b) \in \Ts_{WGD}$, where $g_{(.)}$ denote pre-assigned direction vector. 
\end{enumerate}

Assumptions~\ref{lab2} and~\ref{lab3} are consistent with the second law of thermodynamics, which states that the use of polluting inputs to meet energy requirements in any production process inevitably generates residuals. That is, the production process is zero-emission if no emission-generating input is utilized. Assumptions~\ref{lab2} and~\ref{lab3} are extensions of the null-jointless assumption \citep{Fare2012}, where the presence of polluting inputs is overshadowed. 

Assumption~\ref{lab4} ensures compliance with the laws of conservation, where $u$ and $r$ represent the emission factors of emission-generating inputs and the recuperation factor of desirable outputs, respectively. Note that emission factors for non-emission-generating inputs (e.g., labor and capital) equal zero, as they do not contribute to emissions \citep{Coelli2007, Lauwers2009}. This assumption mandates the disposability of inputs and outputs subject to a summation constraint, indicating that any increase in pollution due to higher input use and/or reduced desirable output must be offset by a corresponding rise in undesirable outputs during the disposal process.

The WGD technology is formulated as
\begin{alignat}{2}
\label{eq:eq3}
    \Ts_{WGD}=& \{(x^N, x^P, y, b) \in \real_+^{M+S+J}| \lambda X^N \le x^N, \lambda X^P + s_x = x^P, \lambda Y- s_y = y, \lambda B + s_b = b, \\ 
    & u s_x + r s_y - s_b=0, \mathbbm{1}^T \lambda=1; \text{ for } \lambda \in \real_+^i \}, \notag
\end{alignat}
where the selection of direction vectors $(g_x, g_y, g_b)$ are replaced by their empirical counterparts, the slacks $(s_x,s_y,s_b)$.

\subsection{Joint disposability}

Analogous to the BP and weak G-disposability paradigms, joint disposability (JD) explicitly differentiates between emission-generating inputs and non-emission-generating inputs \citep{Ray2018}. The following assumptions on the production possibility set in the JD framework are considered.
\begin{enumerate}[label=A\arabic*.]
    \setcounter{enumi}{4}
    \item Free disposability of  inputs and desirable outputs 
    \item[] If $ (x^N,x^P,y) \in \Ts_3, \bar{x}^N \geq x^N, \bar{x}^P \geq x^P, \bar{y} \le y \Rightarrow (\bar{x}^N, \bar{x}^P, \bar{y}) \in \Ts_3.$
    \item Weak disposability between emission and emission-generating inputs 
    \item[] If $(x^P,b) \in \Ts_4, 0 \le \theta \le 1, \Rightarrow (\theta x^P,\theta b) \in \Ts_4.$
    \item $\Ts_{JD}=\Ts_3 \cap \Ts_4.$
\end{enumerate}

The disposability assumption for original JD technology posits the WD assumption of emissions and emission-generating inputs in residual-generating technology ($\Ts_4$), while assuming free disposability of non-emission-generating inputs and desirable outputs in production technology ($\Ts_3$). This approach distinguishes between consolidated and decentralized methods, where a single intensity vector (i.e., $\lambda$) is used in the former method to construct reference input–output bundles in both sub-technologies, and two distinct intensity vectors (i.e., $\lambda$ for $\Ts_3$ and $\mu$ for $\Ts_4$) are utilized in the latter. Recognizing the interdependence of processes like electricity generation and pollution emission, the specification of two separate intensity variables is questionable. Therefore, benchmark bundles are created by treating the entire input–output bundle as a single peer group vector.
\begin{alignat}{2}
\label{eq:eq4}
\Ts_{JD} = \{(x^N, x^P, y, b) \in \real_+^{M+S+J}| \lambda X^N \le x^N, \lambda X^P=x^P, \lambda Y \geq y, \lambda B=b, \mathbbm{1}^T\lambda=1; \text{ for } \lambda \in \real_+^i\}.
\end{alignat}

%

\section{Shadow price estimation}\label{sec:shadow}

\subsection{Full frontier estimation}\label{sec:deter}
\subsubsection{BP technology}

As discussed in Section~\ref{sec:pollut}, the emission-generating technology should simultaneously satisfy three fundamental assumptions. However, conventional BP technology fails to represent the trade-off between desirable and undesirable outputs. Overlooking such a crucial assumption can result in biased estimates, potentially leading to misguided environmental policymaking. In this paper, we not only model the positive trade-off between desirable and undesirable outputs but also demonstrate the reduction of undesirable outputs through the decreased use of emission-generating inputs. 

Under the BP technology with the VRS specification, we employ the DDF to characterize the simultaneous changes in the entire input-output space and construct the following graph efficiency improvement model.
\begin{alignat}{2}
	\underset{ } {\mathop{\max }}\, \quad & w_1\theta_m\ +w_2\theta_j+w_3\theta_k &{\quad}& \label{eq:eq5}\\
	\mbox{s.t.}\quad
    & \sum_{i=1}^{I} \lambda_i y_{ij} \geq y_{oj} +\theta_j g_y &{}& \forall j \notag \\
	& \sum_{i=1}^{I}\lambda_i x_{im}^N \le x_{om}^N  &{}& \forall m_1 \notag \\
	& \sum_{i=1}^{I}\lambda_ix_{im}^P \le x_{om}^P-\theta_m g_x &{}& \forall m_2 \notag  \\ 
	& \sum_{i=1}^{I} \mu_ib_{ik} \le b_{ok}-\theta_k g_b &{}& \forall k \notag  \\
    & \sum_{i=1}^{I}\lambda_ix_{im}^P=\sum_{i=1}^{I}\mu_ix_{im}^P &{}& \forall m_2 \notag  \\
    & \sum_{i=1}^{I}\lambda_i=1,  \,  \sum_{i=1}^{I}\mu_i=1  &{}&   \notag  \\
    & \lambda_i \geq 0, \, \mu_i \geq 0 &{}& \forall i  \notag  \\
    & \theta_m,  \,  \theta_j,  \,  \theta_k \geq 0  &{}& \forall m_2, j, k \notag 
\end{alignat}
where $(g_x,g_b, g_y)$ represents the non-zero directional vector associated with the adjustments in emission-generating inputs, undesirable outputs, and desirable outputs (while keeping non-emission-generating inputs constant). $\lambda_i$ and $\mu_i$ are the intensity variables for $\Ts_1$ and $\Ts_2$, respectively. 

Model~\eqref{eq:eq5} captures efficiency improvement that maximizes the weighted sum of proportional reductions in emission-generating input ($\theta_m$), undesirable output ($\theta_k$), and proportional increase in the desirable output ($\theta_j$). The weights assigned to proportionate changes in emission-generating inputs, desirable and undesirable outputs are $w_1 \in (0,1)$, $w_2\in (0,1)$, and $w_3\in (0,1)$, respectively. Under the assumption of free disposability of inputs, the optimal solution of model~\eqref{eq:eq5} indicates no efficiency improvement in the usage of emission-generating inputs when these inputs are assigned zero weights ($\theta_m=0$ whenever $w_1=0$; $T_{WD}$). In contrast, under the BP and joint disposability technologies, only non-emission-generating inputs are freely disposable, leading to inefficiencies, $\theta_m\neq0$, even when the weights for emission-generating inputs are set to zero. Accordingly, we assign a zero weight to the proportionate change in the emission-generating inputs ($w_1 = 0$), while equal weights are allocated to the proportionate changes in both desirable and undesirable outputs ($w_2 = w_3 = 0.5$).
\begin{Proposition}
    The graph efficiency improvement model~\eqref{eq:eq5} can be equivalently reformulated as the following sign-constrained CNLS model~\eqref{eq:eq6}.
    \vspace{-0.3cm}
\begin{alignat}{2}
	\underset{ } {\mathop{\min }}\, \quad & \sum_{i=1}^{I} \varepsilon_i^2 &{\quad}& \label{eq:eq6}\\
	\mbox{s.t.}\quad
    & \varepsilon_i=(\alpha_i-{\bar{\alpha}}_i)+\beta_i^\prime x_{im}^N+\eta_i^\prime x_{im}^p+\omega_i^\prime b_{ik}-\gamma_i^\prime y_{ij} &{}& \forall i\notag \\
	& \alpha_h{+\beta}_h^\prime x_{im}^N+(\eta_h^\prime+{\bar{\eta}}_h^\prime{)x}_{im}^p-\gamma_h^\prime y_{ij}\le\alpha_i{+\beta}_i^\prime x_{im}^N+(\eta_i^\prime+{\bar{\eta}}_i^\prime{)x}_{im}^p-\gamma_i^\prime y_{ij}  &{}& \forall i, h \notag \\
    & \omega_h^\prime b_{ik}-(\bar{\alpha}_h+\bar{\eta}_h^\prime x_{im}^p)\le\ \omega_i^\prime b_{ik}-(\bar{\alpha}_i+\bar{\eta}_i^\prime x_{im}^p) &{}& \forall i, h \notag \\
	& \gamma_i^\prime g_y \geq 0.5, \, \omega_i^\prime g_b \geq 0.5,  \, \eta_i^\prime g_x \geq 0 &{}& \notag \\ 
    & \gamma_i\geq 0,  \, \beta_i\geq 0,  \, \eta_i\geq 0,  \, \omega_i\geq 0 &{}& \forall i \notag \\
    & \varepsilon_i\geq0  &{}& \forall i \notag  
\end{alignat}
where $\beta_i$, $\eta_i$ and $\gamma_i$ represent the corresponding dual variables of $x_i^{NP}$, $x_i^P$, and $y_i$ under sub-technology $\Ts_1$. $\bar{\eta}_i$ and $\omega_i$ denote the multipliers of $x_i^P$ and $b_i$ under sub-technology $\Ts_2$. The multiplier $\alpha_i$ and $\bar{\alpha}_i$ define the intercepts in constructing DMU-specific hyperplanes for two sub-technologies $\Ts_1$ and $\Ts_2$, respectively. Recall that, as proportional changes in emission-generating inputs and outputs are assumed to be positive in the primal model~\eqref{eq:eq5}, the corresponding constraints are formulated as inequalities. The formulation~\eqref{eq:eq6} implies that the shadow prices associated with these constraints (i.e., $\gamma$, $\eta$, $\omega$) lead to non-identical values. 
\end{Proposition}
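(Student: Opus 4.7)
The plan is to invoke the classical DEA--CNLS duality of \citet{Kuosmanen2012c}, adapted to the two sub-technology graph-oriented setting of~\eqref{eq:eq5}. The central idea is that, for each evaluated DMU $o$, the dual of the linear program~\eqref{eq:eq5} characterises a supporting hyperplane of $\Ts_{BP}$ at $o$, and the collection of these DMU-specific supporting hyperplanes over $o=1,\ldots,I$ coincides with the feasible set of multipliers in~\eqref{eq:eq6} once the squared-residual objective is minimised.

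First, I would write the Lagrangian of~\eqref{eq:eq5} and attach the multipliers $\gamma_o\geq 0$ to the desirable-output constraint, $\beta_o\geq 0$ to the non-emission-input constraint, $\eta_o\geq 0$ to the $\Ts_1$ emission-input constraint, $\omega_o\geq 0$ to the $\Ts_2$ undesirable-output constraint, and sign-free multipliers $\bar{\eta}_o$, $\alpha_o$, and $\bar{\alpha}_o$ to the linking equality and the two convexity normalisations, respectively. Stationarity in $\lambda_i\geq 0$ yields, for every pair $(i,h)$, the first Afriat-type inequality involving $\alpha$, $\beta$, $\gamma$ and the combined coefficient $\eta+\bar{\eta}$ on $x^P$; stationarity in $\mu_i\geq 0$ yields the second Afriat-type inequality involving $\bar{\alpha}$, $\bar{\eta}$ and $\omega$. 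Stationarity in the sign-restricted variables $\theta_j,\theta_k,\theta_m$, combined with the weights $(w_1,w_2,w_3)=(0,0.5,0.5)$, produces the normalisation conditions $\gamma_o'g_y\geq 0.5$, $\omega_o'g_b\geq 0.5$, and $\eta_o'g_x\geq 0$.

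Next, I would verify that, at a dual optimum, the expression
\begin{equation*}
\varepsilon_o=(\alpha_o-\bar{\alpha}_o)+\beta_o'x^N_{om}+\eta_o'x^P_{om}+\omega_o'b_{ok}-\gamma_o'y_{oj}
\end{equation*}
coincides with the optimal primal value of~\eqref{eq:eq5} for DMU $o$, so that $\varepsilon_o\geq 0$ measures the directional inefficiency of the observation relative to its own supporting hyperplane. Because the sign-constrained CNLS program~\eqref{eq:eq6} minimises $\sum_o\varepsilon_o^2$ over the same feasible set of multipliers and Afriat inequalities, each DMU's hyperplane is pushed as close to the observation as concavity permits; this pointwise tightness reconstructs the piecewise-linear DEA envelope DMU by DMU and establishes the claimed equivalence.

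The main obstacle will be the careful treatment of the linking equality $\sum_i\lambda_ix^P_{im}=\sum_i\mu_ix^P_{im}$, whose sign-free multiplier $\bar{\eta}_o$ is split between the two Afriat inequalities: it enters the $\Ts_1$ block additively with $\eta_o$ on $x^P$, but enters the $\Ts_2$ block together with the intercept $\bar{\alpha}_o$. One must check that this decomposition is identifiable only up to an additive redistribution across the two intercepts, which is precisely why~\eqref{eq:eq6} carries the difference $(\alpha_o-\bar{\alpha}_o)$ in the residual rather than $\alpha_o$ and $\bar{\alpha}_o$ separately. A secondary subtlety is that the choice $w_1=0$ renders $\eta_o'g_x\geq 0$ automatically satisfied by $\eta_o\geq 0$; one has to confirm that this is consistent with the costly-disposability of $x^P$ and does not degenerate the environmental sub-frontier, which follows because the linking equality keeps $\bar{\eta}_o$ active at the optimum.
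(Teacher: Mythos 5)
Your proposal takes essentially the same route as the paper's proof in Appendix~\ref{App:proof1}: dualize the graph-efficiency LP~\eqref{eq:eq5}, read the normalisations $\gamma_i^\prime g_y \geq 0.5$, $\omega_i^\prime g_b \geq 0.5$, $\eta_i^\prime g_x \geq 0$ off the $\theta$-columns, identify the residual $\varepsilon_i$ with the dual objective value (split across the two sub-technologies), and close the system with the pairwise Afriat concavity inequalities to obtain~\eqref{eq:eq6}. The differences are only presentational: you phrase the dualization via Lagrangian stationarity rather than writing the dual LP directly, and you add remarks on the identifiability of $\bar{\eta}_i$ and the appearance of the difference $(\alpha_i-\bar{\alpha}_i)$ that the paper leaves implicit.
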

\begin{proof}
See Appendix~\ref{App:proof1}.
\end{proof}

\subsubsection{Weak G-disposability}

Under WGD, the selection of the direction vector is constrained by the summing-up constraint. However, to estimate the positive trade-off between emission-generating inputs and undesirable outputs and between desirable and undesirable outputs, \citet{Rodseth2023} propose a slack-based DDF method with fixed direction vectors $(g_x, g_y, g_b) = (1, 1, 1)$ and relaxes the impact of the recuperation factor of desirable outputs by setting $r = 0$ in the summing-up constraint. The weak G-disposability under sign-constrained CNLS model is formulated as
\begin{alignat}{2}
	\underset{ } {\mathop{\min }}\, \quad & \sum_{i=1}^{I} \varepsilon_i^2 &{\quad}& \label{eq:eq7}\\
	\mbox{s.t.}\quad
    & \gamma_i^\prime y_i=\alpha_i+\beta_i^\prime x_i^N+\eta_i^\prime x_i^P+\omega_i^\prime b_i-\varepsilon_i &{}& \forall i\notag \\
	& \alpha_i+\beta_i^\prime x_i^N+\eta_i^\prime x_i^P+\omega_i^\prime b_i-\gamma_i^\prime y_i\le\alpha_h+\beta_h^\prime x_i^N+\eta_h^\prime x_i^P+\omega_h^\prime b_i-\gamma_h^\prime y_i  &{}& \forall i, h \notag \\
	& \eta_i^\prime+\omega_i^\prime u\geq0  &{}& \forall i  \notag \\ 
    & \gamma_i^\prime+\omega_i^\prime+\eta_i^\prime=1 &{}& \forall i \notag \\
    &\beta_i\geq0,\gamma_i\geq0  &{}& \forall i \notag  \\
    & \varepsilon_i\geq0  &{}& \forall i \notag  
\end{alignat}
where $\beta_i$, $\eta_i$, $\omega_i$, and $\gamma_i$ characterize the corresponding dual variables of $x_i^{NP}$, $x_i^P$, $b_i$ and $y_i$, respectively. The objective function in model~\eqref{eq:eq7} calculates the sum of squared disturbance terms. The first set of constraints denotes the distance to the frontier as a linear function of inputs and outputs. The second set of constraints ensures convexity among the hyperplanes in all pairs of observations \citep{Afriat1972}. The third set of constraints imposes the WGD constraint that addresses the correlation between the dual prices of polluting inputs and pollution, contingent upon the material flow coefficient, $u$ (i.e., $us_x-s_b=0$). The fourth set of constraints describes the translation property \citep{Chambers1998b}, and non-negativity of dual variables retains the monotonic production frontier. Furthermore, in contrast to the specification of weak disposability, inputs are not freely disposable in model~\eqref{eq:eq7}.

However, model~\eqref{eq:eq7} requires the DMU-specific knowledge on emission and recuperation factors (i.e., $u$, $r$). For instance, the recuperation factor ($r$) for electricity generation is calculated as the ratio of recoverable electricity generation to gross electricity generation, multiplied by 100. Moreover, different types of coal (e.g., bituminous, sub-bituminous, lignite) have distinct emission factors due to variations in their composition and energy content. In addition, the WGD framework imposes highly restrictive constraints on the reduction of undesirable outputs due to the summing-up condition. This constraint necessitates trade-offs, meaning that reducing an undesirable output often requires either maintaining a fixed level of inputs or a fixed level of desirable outputs, thereby limiting flexibility and potential efficiency improvement.

\subsubsection{Joint-disposability}

In the context of WD, any commodity (output or input) should be analyzed in conjunction with other elements. The approach introduced by~\citet{Fare1989} highlights that, under WD, both desirable and undesirable outputs can be proportionally reduced while maintaining constant input levels. Moreover, this approach entails a trade-off between emission-generating inputs and emissions, while assuming standard free disposability for desirable outputs and non-emission-generating inputs.

We utilize DDF to characterize JD technology under the VRS specification and have the following JD model
\vspace{-0.79cm}
\begin{alignat}{2}
	\underset{ } {\mathop{\max }}\, \quad & \theta &{\quad}& \label{eq:eq8}\\
	\mbox{s.t.}\quad
    & \sum_{i=1}^{I}{\lambda_iy_{ij}\geq y_{oj}}+\theta g_y &{}& \forall j \notag \\
	& \sum_{i=1}^{I}{\lambda_ix_{im}^N\le x_{om}^N}  &{}& \forall m  \notag \\
	& \sum_{i=1}^{I}{\lambda_ix_{im}^P=x_{om}^P}-\theta g_x &{}& \forall j  \notag  \\ 
    & \sum_{i=1}^{I}{\lambda_ib_{ik}=}b_{ok}-\theta g_b &{}& \forall k  \notag  \\ 
    & \sum_{i=1}^{I}\lambda_i=1  &{}& \forall i  \notag  \\ 
    & \lambda_i\geq0, \theta\geq0 &{}& \forall i  \notag 
\end{alignat}
where the DMU is identified as efficient if the evaluated value $\theta$ is zero, and DMUs that meet feasibility criteria but are inefficient will exhibit values exceeding zero.
\begin{Proposition}
    The JD model~\eqref{eq:eq8} can also be equivalently reformulated as the sign-constrained CNLS model~\eqref{eq:eq9}.
\end{Proposition}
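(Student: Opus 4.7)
My plan is to mirror the duality route that underlies Proposition 1: translate primal~\eqref{eq:eq8} into its linear programming dual, read off the sign structure of each dual multiplier from the sign of the corresponding primal constraint, and then recognize the resulting object as the sign-constrained CNLS program with Afriat concavity inequalities evaluated at every DMU.

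First, I would form the Lagrangian of~\eqref{eq:eq8}, attaching non-negative multipliers $\gamma_j \geq 0$ to the desirable-output inequalities, $\beta_m \geq 0$ to the non-emission-input inequalities, and free multipliers $\eta_m$ and $\omega_k$ to the two joint-disposability equalities on $x^P$ and $b$, together with a free intercept $\alpha$ for the VRS convexity constraint $\sum_i \lambda_i = 1$. Stationarity in $\theta$ will give the translation-property normalization $\gamma^\prime g_y + \eta^\prime g_x + \omega^\prime g_b = 1$, while stationarity in each $\lambda_i$ will produce the DMU-specific supporting-hyperplane inequality
\begin{equation*}
\alpha + \beta^\prime x_i^N + \eta^\prime x_i^P + \omega^\prime b_i - \gamma^\prime y_i \geq 0, \quad \forall i,
\end{equation*}
which binds at the DMU being evaluated. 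These are precisely the primitives I expect to appear in the CNLS constraints of~\eqref{eq:eq9}.

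Next, I would promote the single shadow-price vector obtained at DMU $o$ into $I$ DMU-specific vectors $(\alpha_i, \beta_i, \eta_i, \omega_i, \gamma_i)$, one per evaluated unit, and impose the Afriat inequalities
\begin{equation*}
\alpha_i + \beta_i^\prime x_i^N + \eta_i^\prime x_i^P + \omega_i^\prime b_i - \gamma_i^\prime y_i \leq \alpha_h + \beta_h^\prime x_i^N + \eta_h^\prime x_i^P + \omega_h^\prime b_i - \gamma_h^\prime y_i
\end{equation*}
for every pair $(i,h)$. Following the argument of \citet{Kuosmanen2012c}, these inequalities guarantee that the upper envelope of the DMU-specific hyperplanes is globally concave, so that the implied production possibility set reproduces $\Ts_{JD}$ in~\eqref{eq:eq4}. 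Defining $\varepsilon_i$ as the non-negative slack of DMU $i$ from its own hyperplane and minimizing $\sum_i \varepsilon_i^2$ subject to the translation normalization, the Afriat inequalities, $\beta_i, \gamma_i \geq 0$, and unrestricted $\eta_i, \omega_i$ recovers~\eqref{eq:eq9} and, by construction, the same efficient frontier as~\eqref{eq:eq8}.

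The hardest part will be justifying the free-sign treatment of $\eta_i$ and $\omega_i$, which is peculiar to JD: because $x^P$ and $b$ enter as equalities in~\eqref{eq:eq8}, the associated dual prices are unrestricted in sign, and I will need to check that the CNLS concavity constraints together with the translation normalization still pin the frontier down as monotone in $y$ and $x^N$ while admitting the joint trade-off between $x^P$ and $b$ that characterizes weak joint disposability. A secondary subtlety will be showing that the least-squares objective, which is absent from~\eqref{eq:eq8}, does not alter the set of admissible supporting hyperplanes; this should follow because the constraint $\varepsilon_i \geq 0$ already forces each fitted value onto the frontier, so the quadratic penalty only singles out a particular representative when the supporting hyperplane at $(x_i, y_i, b_i)$ is not unique.
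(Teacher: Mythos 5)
Your proposal is correct and follows essentially the same route as the paper, which proves this proposition by deferring to the LP-duality argument of Kuosmanen (2006, Appendix 1) and carries out the identical construction explicitly for Proposition 1 in its own appendix: dualize the DEA program, read off the sign of each multiplier from the corresponding primal constraint (free $\eta_i,\omega_i$ from the equalities on $x^P$ and $b$, the normalization $\gamma_i^\prime g_y+\eta_i^\prime g_x+\omega_i^\prime g_b=1$ from the $\theta$ column), introduce the non-negative residuals $\varepsilon_i$, and impose the pairwise Afriat concavity inequalities. Your closing remarks on the free-signed multipliers and on the least-squares objective merely selecting a representative supporting hyperplane are exactly the right subtleties and are consistent with the paper's formulation of model~\eqref{eq:eq9}.
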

\vspace{-1cm}
\begin{alignat}{2}
	\underset{ } {\mathop{\min }}\, \quad & \sum_{i=1}^{I} \varepsilon_i^2 &{\quad}& \label{eq:eq9}\\
	\mbox{s.t.}\quad
    & \gamma_i^\prime y_i{=\alpha}_i+\beta_i^\prime x_i^N+\eta_i^\prime x_i^P+\omega_i^\prime b_i-\varepsilon_i  &{}& \forall i\notag \\
	& \alpha_i+\beta_i^\prime x_i^N+\eta_i^\prime x_i^P+\omega_i^\prime b_i-\gamma_i^\prime y_i\le\alpha_h +\beta_h^\prime x_i^N+\eta_h^\prime x_i^P+\omega_h^\prime b_i-\gamma_h^\prime y_i  &{}& \forall i, h \notag \\
    & \eta_i^\prime g_x+\omega_i^\prime g_b+\gamma_i^\prime g_y=1 &{}& \forall i \notag \\
    &\beta_i\geq0,\gamma_i\geq0  &{}& \forall i \notag  \\
    & \varepsilon_i\geq0  &{}& \forall i \notag  
\end{alignat}
\vspace{-1cm}
\begin{proof}
    See Appendix 1 in \citet{Kuosmanen2006}. 
\end{proof}

This section focuses on estimating shadow prices using nonparametric full frontier approaches across three emission-generating technologies. However, the critical challenge existing in these approaches is the considerable diversity of socioeconomic characteristics among real applications, which introduces unobserved heterogeneity and various forms of statistical noise. Such heterogeneity is often misconstrued as inefficiency by full frontier estimators. The impact of heterogeneity is particularly pronounced in panel data settings. Therefore, shadow price estimation requires a more robust approach to accurately account for unmeasured heterogeneity and inefficiency.

\subsection{Quantile frontier estimation}\label{sec:robu}

For a given quantile $\tau \in (0,1)$, the conditional nonparametric quantile function $Q_{y_i}(\tau| x_i, b_i)$ is given as \citep{Wang2014c, Kuosmanen2020b}
\begin{equation} \label{eq:eq10}
Q_{y_i}(\tau| x_i, b_i) = f(x_i, b_i) \times F_{\varepsilon_i}^{-1}(\tau) \quad 
\end{equation}
where $f(x_i, b_i)$ denotes the emission-generating function and $F_{\varepsilon_i}^{-1}$ refers to the inverse cumulative distribution function of the error term $\varepsilon_i$. In the multiple-input multiple-output settings, we can use the DDF to characterize a specific quantile $\tau$ of the production frontier
\begin{equation} \label{eq:eq11}
\overrightarrow{D}(x_i, y_i, b_i; g_x, g_y, g_b) = \sup\{\varepsilon | \prob(x_i^*, y_i^*,b_i^*) \geq 1 - \tau\}
\end{equation}
where the direction vector $(g_x, g_y, g_b) \in \mathbb{R}_{+}^{M_2+J+K}$ plays a crucial role in projecting inefficient DMUs onto the efficient frontier through the scaling of a composite error term. The optimal solution is then characterized by $(x_i^*, y_i^*, b_i^*)$. Note that the DDF inherently possesses fundamental axiomatic properties.

To obtain the unique emission-generating function, we employ an indirect quantile estimation approach,\footnote{
    Convex quantile regression (CQR) is a direct approach to estimating the quantile emission-generating function, as the quantile and expectile can be transformed into each other. See further detailed comparisons on CQR and CER approaches in \citet{Dai2023}.
}
convex expectile regression (CER), to estimate the emission-generating function $f(x_i, b_i)$. This is achieved by replacing the objective function and error term in the sign-constrained CNLS models \eqref{eq:eq6}, \eqref{eq:eq7}, or \eqref{eq:eq9}. For instance, under the BP technology, the sign-constrained CNLS model~\eqref{eq:eq6} is reformulated within the CER framework as follows
\begin{alignat}{2}
    \underset{ } {\mathop{\min }}\, \quad & (1-\tau) \sum_{i=1}^{I}(\varepsilon_i^+)^2 + \tau \sum_{i=1}^{I}(\varepsilon_i^-)^2 &{\quad}& \label{eq:eq12}\\
    \mbox{s.t.}\quad
    & \varepsilon_i^+ - \varepsilon_i^- =(\alpha_i-{\bar{\alpha}}_i)+\beta_i^\prime x_{im}^N+\eta_i^\prime x_{im}^p+\omega_i^\prime b_{ik}-\gamma_i^\prime y_{ij} &{}& \forall i\notag \\
	& \alpha_h{+\beta}_h^\prime x_{im}^N+(\eta_h^\prime+{\bar{\eta}}_h^\prime{)x}_{im}^p-\gamma_h^\prime y_{ij}\le\alpha_i{+\beta}_i^\prime x_{im}^N+(\eta_i^\prime+{\bar{\eta}}_i^\prime{)x}_{im}^p-\gamma_i^\prime y_{ij}  &{}& \forall i, h \notag \\
    & \omega_h^\prime b_{ik}-(\bar{\alpha}_h+\bar{\eta}_h^\prime x_{im}^p)\le\ \omega_i^\prime b_{ik}-(\bar{\alpha}_i+\bar{\eta}_i^\prime x_{im}^p) &{}& \forall i, h \notag \\
	& \gamma_i^\prime g_y \geq 0.5, \, \omega_i^\prime g_b \geq 0.5,  \, \eta_i^\prime g_x \geq 0 &{}& \notag \\ 
    & \gamma_i\geq 0,  \, \beta_i\geq 0,  \, \eta_i\geq 0,  \, \omega_i\geq 0 &{}& \forall i \notag \\
    & \varepsilon_i^+\geq0,  \, \varepsilon_i^-\geq0  &{}& \forall i \notag  
\end{alignat}
where the error terms ($\varepsilon_i^-$ and $\varepsilon_i^+$) denote negative and positive deviations from the quantile frontier $\tau$ and $1-\tau$ characterize the weights of error terms and segment the production possibility set into upper and lower quantiles. In the context of emission-generating technologies, the sign-constrained CNLS models determine the conditional mean by minimizing a squared error term through quadratic programming. In contrast, the CER model addresses the conditional quantile by minimizing asymmetric squared error terms, resulting in a unique emission-generating function.

The CER model~\eqref{eq:eq12} provides more general estimation as sign-constrained CNLS models are equivalent to upper $\tau$-efficient when $\tau$ approaches unity. In addition, at the optimum, $\varepsilon_i^-\times\varepsilon_i^+=0$, which means an observation can either have a positive quantile residual ($\varepsilon_i^+ \ge 0$) or a negative quantile residual ($\varepsilon_i^- \ge 0$), but not both simultaneously \citep{Kuosmanen2020b}. Recall that the CER model builds upon CQR, specifically by replacing the $L_1$norm distance found in CQR with a quadratic $L_2$ norm term in the objective function to guarantee uniqueness of the optimal solution.

From a statistical perspective, CER offers several advantages over sign-constrained CNLS models. The CER approach is more robust to outliers in the data space due to that it focuses on quantiles rather than the mean, making it particularly useful for datasets with skewed distributions or heavy tails. CER also provides a more comprehensive analysis of the relationship between input and output variables by examining various points in the distribution. Furthermore, unlike conventional frontier methods that are sensitive to the direction vector, CER is less dependent on the choice of the direction vector due to the smaller fraction of the distance to the $\tau$-frontier of inefficient units \citep{Kuosmanen2020b, Dai2023}.

\subsection{MAC estimation}\label{sec:mac}

To estimate shadow prices of undesirable outputs, the duality relationship between the DDF and the revenue function is applied \citep{Fare2005a}. MAC is then typically measured as the marginal rate of transformation (MRT) between the undesirable output and the desirable output,
 $$\text{MRT} = \dfrac{{\partial\vec{D}(x,y,b,\vec{g})}/{\partial b}}{{\partial\vec{D}(x,y,b,\vec{g})}/{\partial}y}.$$ 

However, in practice, DMUs have various alternative options, such as input substitution, investment in abatement technologies, demand reduction, and purchasing emission allowances, to reduce undesirable outputs \citep{Forsund2009}. These options are beyond the conventional trade-off between desirable and undesirable outputs in production. Consequently, the use of the MRT to represent MAC has been criticized in the literature, as it requires additional assumptions regarding abatement options.

In MAC analysis, the term ``bang for the buck'' commonly refers to maximizing the reduction of the undesirable output (e.g., carbon emission) at minimal cost, and serves as a benchmark for evaluating and prioritizing the cost-effectiveness of various reduction methods. Accordingly, \citet{Kuosmanen2020b} propose a novel approach to estimate MAC based on the least-cost principle
$$\text{MAC} = \min\{p \times \text{MRT},w \times \text{MP}\},$$
where $p$ and $w$ denote the market prices of desirable output and emission-generating input. MP refers to the marginal product of undesirable output with respect to the emission-generating input, and is calculated as 
$$MP=\dfrac{{\partial\vec{D}(x,y,b,\vec{g})}/{\partial b}}{{\partial\vec{D}(x,y,b,\vec{g})}/{\partial x}}.$$

The MP metric can be used to control undesirable outputs by measuring the impact of incrementally decreasing an emission-generating input while keeping other non-emission-generating inputs constant. After considering input-side abatement options (e.g., fuel switch or investment in cleaner technology), the MAC estimates substantially decrease; see the empirical application to U.S. electric power plants in \citet{Kuosmanen2020b}. The new approach to estimating MAC can help policymakers and businesses identify where they can achieve the most significant reduction in emissions at the least cost. This is crucial for designing effective environmental policies and strategies.
 
In the realm of shadow price estimation, DDF facilitates the simultaneous adjustment of inputs and outputs via a predefined direction vector. However, the MRT and MP estimates are sensitive to the choice of the direction vector, particularly in the full frontier estimation \citep{Lee2002}. \citet{Layer2020} propose a data-driven approach for selecting the direction vector, known as the radial mean squared error measure, which identifies a direction orthogonal to the true frontier at the center of the data cloud. It has demonstrated superior performance relative to alternative direction vectors, particularly in terms of estimation accuracy. This approach adapts flexibly to both the shape of the production technology (whether convex or concave) and the underlying data distribution. By accounting for noise in all inputs and outputs, it offers a more robust inefficiency assessment than conventional methods, which often overlook this aspect. Following \citet{Layer2020}, we normalize the emission-generating input, desirable, and undesirable outputs
\begin{alignat}{2}
& \check{x}_i^P=\frac{x_h^p-\min_h(x_h^p)}{\max_h(x_h^p)-\min_h(x_h^p)}  &{\quad}& \forall i, h  \\
& \check{b}_i=\frac{b_i-\min_h(b_h)}{\max_h(b_h)-\min_h(b_h)}  &{\quad}& \forall i, h \notag \\
& \check{y}_i=\frac{y_i-\min_h(y_h)}{\max_h(y_h)-\min_h(y_h)}  &{\quad}& \forall i, h \notag
\end{alignat}     
We then specify directional vectors as in Table~\ref{tab:tab2} for each emission-generating technology.
\begin{table}[H]
    \centering
    \caption{Specifying direction vector.}
    \begin{tabular}{lccc}
        \toprule
        Technology & $g_y$ & $g_b$ & $g_x$\\
         \midrule
        BP & $1-\text{median}({\check{y}}_i)$ & $\text{median}({\check{b}}_i)$ & $1-\text{median}({\check{x}}_i^P)$\\
        WGD & $s_y=0$	 & \multicolumn{2}{c}{$us_x-s_b=0$} \\
        JD & $1-\text{median}({\check{y}}_i)$ & $\text{median}({\check{b}}_i)$ & $1-\text{median}({\check{x}}_i^P)$\\
        \bottomrule
    \end{tabular}
    \label{tab:tab2}
\end{table}

We would stress that the CER or CQR models are very robust to the choice of the direction vector. For the quantile estimation, we consider seven quantiles $\tau = \{0.05, 0.20, 0.35, 0.50, 0.65, 0.80, 0.95\}$. When a DMU lies beyond the 95th quantile or below the 5th quantile, the nearest quantile is used to compute MRT and MP. In contrast, DMUs that fall within two quantiles (e.g., between the 35th and 50th quantiles) are evaluated using the mean value of the corresponding estimates at quantile $\tau$ and $\tau + 0.15$ \citep{Kuosmanen2020b, Dai2025}. The 5th percentile provides insight into the lower tail of the distribution, identifying the least efficient units. The median (50th percentile) represents the central tendency of the shadow price distribution, while the 95th percentile reflects the upper bound, capturing information on the most efficient units.

%

\section{Empirical application}\label{sec:emp}

\subsection{Data and variables}

We conduct an empirical application to estimate the MAC of CO$_2$ emissions for U.S. coal-fired power plants operating in 2022. We select medium- and large-sized coal-fired power plants, as benchmarking smaller units is challenging due to unstable ramp-up conditions. Moreover, the likely homogeneity in technology among smaller units could lead to underestimation in shadow prices. This paper specifically concentrates on bituminous coal, the predominant coal type used in the sector. Note that bituminous coal is primarily consumed in industrial applications that require high thermal energy, such as electricity generation and coke production in the steel industry.

Similar to the inputs and outputs selected in \citet{Mekaroonreung2012a}, \citet{Hampf2014}, and \citet{Walheer2020}, our dataset comprises one desirable output, one undesirable output, one emission-generating input, and two non-emission-generating inputs. The desirable output is the net electricity generation (100 MWh), and the undesirable output is CO$_2$ emissions (1000 tons). The total fuel consumption (1000 MMBtu) represents the emission-generating input, and the non-emission-generating inputs include plant nameplate capacity (MW) and plant operating availability (hours). We collect data on CO$_2$ and operating availability from the Clean Air Markets Program Data maintained by the U.S. Environmental Protection Agency (EIA). We aggregate the operating availability data from the unit level to the power plant level because each power plant comprises multiple units (e.g., boiler, turbine, generator, cooling system, control, and monitoring system). The fuel consumption and net generation are obtained from EIA-923. Table~\ref{tab:tab3} summarizes the descriptive statistics of 71 bituminous coal-fired power plants in 2022. 
\begin{table}[H]
  \centering
  \caption{Statistics for bituminous coal-fired power plants in 2022 ($I = 71$ DMUs)}
    \begin{tabular}{llrrrr}
    \toprule
    \multicolumn{1}{l}{Variable} & \multicolumn{1}{l}{Unit}  & \multicolumn{1}{c}{Mean}  & \multicolumn{1}{c}{Std. Dev.} & \multicolumn{1}{c}{Min}   & \multicolumn{1}{c}{Max} \\
    \midrule
    Electricity             & 100 MWh       & 43202.95  & 35398.83  & 1080.90   & 157010.82 \\
    CO$_2$ emissions        & 1000 tons     & 4473.20   & 3354.28   & 169.76    & 12337.61 \\
    Total fuel consumption  & 1000 MMBtu    & 44701.17  & 34007.28  & 1664.67   & 127240.77 \\
    Operating availability  & hours         & 14253.16  & 12548.21  & 937.12    & 75645.01 \\
    Nameplate capacity      & MWh           & 1345.84   & 838.1     & 229       & 3498.60 \\
    Electricity price       & \$/100 MWh    & 1.08      & 0.32      & 0.51      & 2.13 \\
    Fuel price              & \$/1000 MMBtu & 46677     & 2451.79   & 1798.00   & 14237.50 \\
    \bottomrule
    \end{tabular}%
  \label{tab:tab3}%
\end{table}%

Electricity prices (in 1000 \$/MWh) for each utility are obtained from the EIA-861 report. Following the methodology proposed by \citet{Mekaroonreung2012a}, plant-level electricity prices are derived by averaging the prices of electricity sold to both end-use customers and for resale by each utility. In cases where utility-specific price data are unavailable, the state-level average retail electricity price reported by EIA is used as a proxy. The average sales price of bituminous coal (in dollars per ton) is sourced from the Annual Coal Report for each relevant state. We assign coal prices at the plant level based on the state in which each plant is located. For states not included in the report, the national average coal price for the study year (i.e., \$4898/1000 MMBtu) is applied.

\subsection{Results and discussion}

We apply BP, JD, and WGD technologies to the U.S. power plants sample to illustrate the differences between the two modeling frameworks. The first is the full frontier estimation based on sign-constrained CNLS (model~\eqref{eq:eq6}, \eqref{eq:eq7}, \eqref{eq:eq9}), where inefficient units are projected onto the full frontier under the assumption of uniform inefficiency across all DMUs. The second is the quantile frontier estimation based on CER (model~\eqref{eq:eq12} and similar), in which shadow prices are estimated locally using nearest quantiles. We adapt the radial mean squared error method discussed in Section~\ref{sec:mac} to determine the direction vector, which is specified as $(g_x, g_b, g_y) = (0.71, 0.31, 0.78)$. These quadratic programming models are solved by the CPLEX solver in GAMS software. Table~\ref{tab:tab4} reports the mean and median estimates  $\binom{\text{mean}}{\text{median}}$ of MAC, $p \times \text{MRT}$, and $w \times \text{MP}$.
\begin{table}[H]
  \centering
  \caption{Mean and median of MAC and its components under full and quantile frontier estimators.}
    \begin{tabular}{lccccccc}
    \toprule
    \multicolumn{1}{l}{\multirow{2}[4]{*}{Technology}} & \multicolumn{3}{c}{Full frontier estimator} &       & \multicolumn{3}{c}{Quantile frontier estimator} \\
\cmidrule{2-4}\cmidrule{6-8}    \multicolumn{1}{l}{} & \multicolumn{1}{c}{MAC} & \multicolumn{1}{c}{$p \times \text{MRT}$} & \multicolumn{1}{c}{$w \times \text{MP}$} &       & \multicolumn{1}{c}{MAC} & \multicolumn{1}{c}{$p\times \text{MRT}$} & \multicolumn{1}{c}{$w \times \text{MP}$} \\
    \midrule
    BP &  $\binom{37.66}{15.16}$    &  $\binom{37.66}{15.16}$ &  $\binom{384475.75}{43989.34}$	 &  &   $\binom{574876.01}{48076.39}$	    &    $\binom{9053462.39}{4535892.31}$   & $\binom{1575827.93}{56992.23}$        \\
    JD &  $\binom{16359.70}{982.40}$	   & $\binom{18559.81}{982.40}$  & $\binom{184227.59}{48711.68}$	 &       & $\binom{5820.69}{24.02}$ & $\binom{7006.28}{24.02}$ & $\binom{90924.64}{57092.26}$     \\
    WGD &  $\binom{8875.65}{9622.23}$  &  $\binom{9005.15}{9622.23}$  &  $\binom{78146.20}{43951.58}$   &    & $\binom{9193.66}{9259.73}$ &  $\binom{9193.66}{9259.73}$	& $\binom{83766.95}{37416.33}$   \\
    \bottomrule
    \end{tabular}%
  \label{tab:tab4}%
\end{table}%

Utilizing the full and quantile frontier estimations, we calculate the MRT between net generation and CO$_2$ emissions and the MP between fuel consumption and CO$_2$ emissions for each power plant. We then derive the monetary shadow prices for CO$_2$ emissions in relation to net generation (i.e., $p \times \text{MRT}$) and total fuel consumption (i.e., $w \times \text{MP}$). The MAC estimate reflects the synergies among abatement options, indicating the minimal electricity loss incurred when reducing fuel consumption to achieve a one-unit reduction in emissions.

Different behavior patterns in MAC estimation are observed across emission-generating technologies when applying full and quantile frontier estimators. The mean and median MACs obtained under the BP technology are larger than those for the other two approaches. For instance, in the CER estimator with BP technology, the MACs have an average value of 574876.01 (\$/1000 tons), while under the JD and WGD technologies, the average MACs are reported as 5820.69 and 9193.66 (\$/1000 tons), respectively.

While reducing fuel consumption can yield long-term savings, it remains a high-cost abatement strategy for the majority of power plants. However, a minority finds it to be the most cost-effective option. The differences between these two abatement alternatives are notably reflected in Table~\ref{tab:tab4}. For instance, quantile estimations indicate that 40\% of power plants consider reducing fuel usage the optimal strategy under the BP technology. In contrast, under JD technology, only 3\% of power plants prefer this alternative. Furthermore, this strategy is deemed cost-ineffective for nearly all power plants under the WGD technology. The findings suggest that downscaling electricity production is a more cost-effective strategy compared to reducing fuel consumption.

In both BP and JD technologies, the average MAC exceeds the corresponding median values under both estimators, indicating a positive-skewed asymmetrical distribution. However, in WGD technology, the opposite is observed (i.e., the mean is less than the median). This aligns with the restrictive summing-up constraint in WGD (i.e., $s_y=0$), leading to the dual price of desirable output (i.e., $\gamma_i$) being set to zero.\footnote{
    To avoid the issue of MRT approaching infinity, we have replaced $\gamma_i=0$ with $\gamma_i=1e-3$.
}
Recall that in a positively skewed distribution, a few outliers shift the mean to the right. 

Fig.~\ref{fig:fig1} presents the characteristics of power plants with the lowest and highest MAC across various emission-generating technologies estimated by the quantile estimator. Fig.~\ref{fig:1a} shows that under the BP technology, power plants with the lowest MAC operate at a large scale, consuming approximately three times the average fuel input and consequently generating higher levels of both emissions and electricity. This elevated level of operation enhances cost-effectiveness, despite higher resource consumption and CO$_2$ emissions. In contrast, power plants with the highest MAC operate at a smaller scale, consuming less fuel (below the sample mean) and producing lower levels of both emissions and electricity. These characteristics are associated with lower efficiency and higher abatement costs. The results suggest that, at higher levels of coal consumption, emission reductions can be achieved more feasibly without substantial losses in electricity generation. Furthermore, the plant with the lowest MAC lies above the 95th quantile, whereas the plant with the highest MAC falls below the 95th quantile in the distribution.
\begin{figure}[htbp]
    \centering
    \begin{subfigure}{0.7\textwidth}
        \centering
        \includegraphics[width=\textwidth]{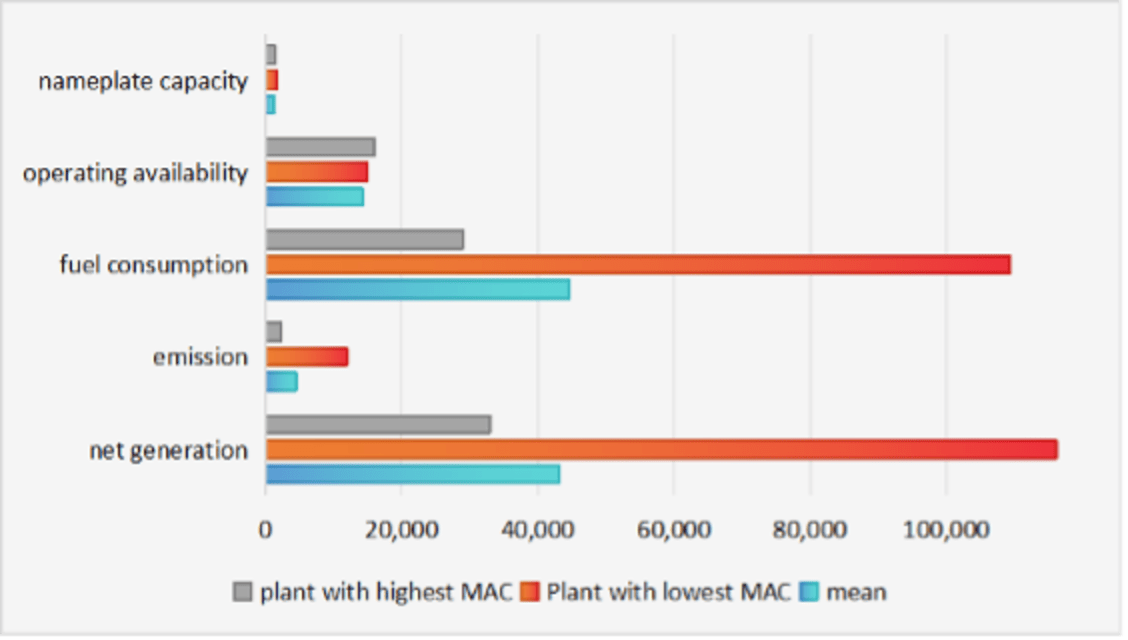}
        \caption{BP}
        \label{fig:1a}
    \end{subfigure}

    \begin{subfigure}{0.7\textwidth}
        \centering
        \includegraphics[width=\textwidth]{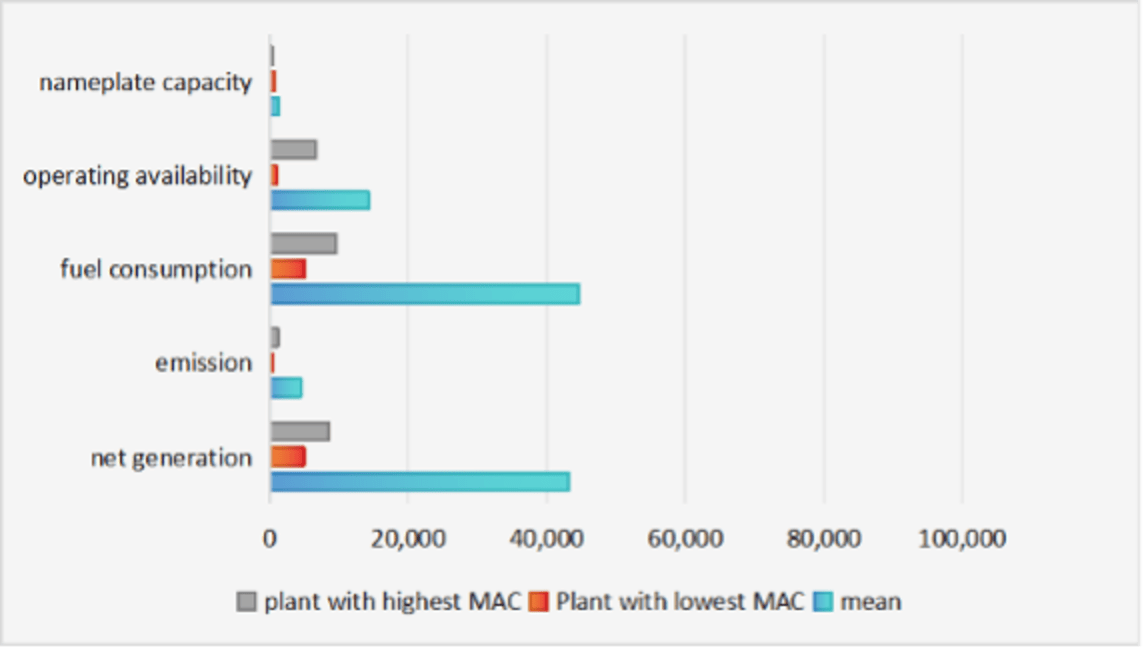}
        \caption{WGD}
        \label{fig:1b}
    \end{subfigure}
    
    \begin{subfigure}{0.7\textwidth}
        \centering
        \includegraphics[width=\textwidth]{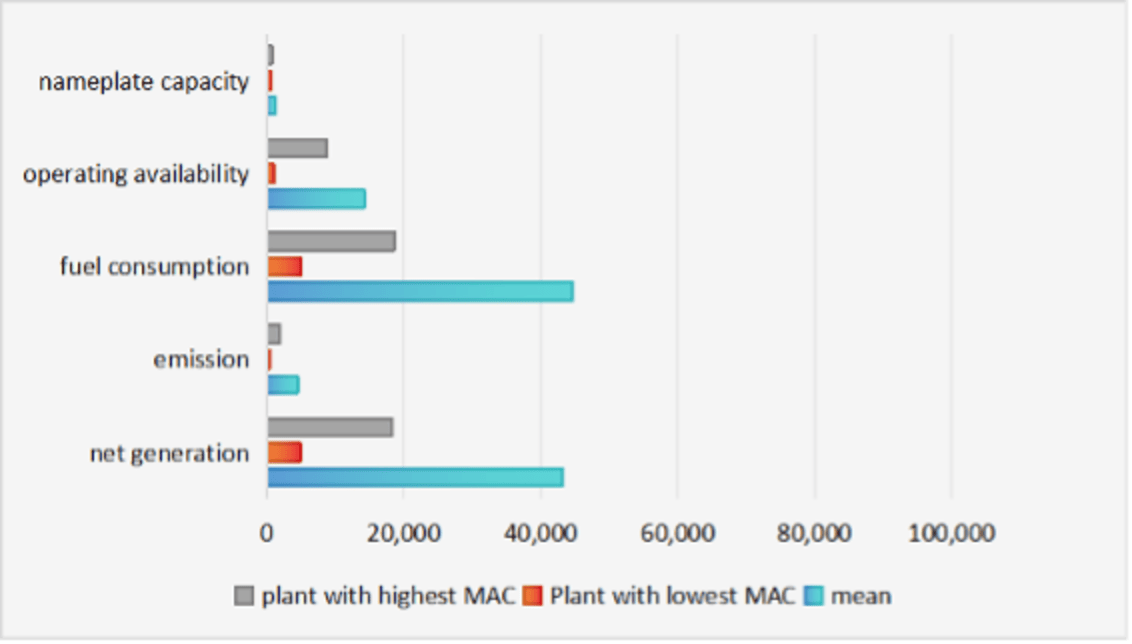}
        \caption{JD}
        \label{fig:1c}
    \end{subfigure}
    \caption{Characteristics of power plants with the lowest and highest MAC across various pollution-generation technologies.}
    \label{fig:fig1}
\end{figure}

Figs.~\ref{fig:1b} and \ref{fig:1c} depict the qualitative characteristics of power plants under JD and WGD technologies, respectively. The patterns observed are broadly consistent across both technologies. Power plants with the lowest MAC operate at a very small scale, using minimal fuel inputs (approximately nine times less than the sample average) and consequently generate lower levels of emissions and electricity. Conversely, plants with the highest MAC also operate on a reduced scale, with fuel consumption roughly two times below the average under JD, and four times below under WGD, leading to correspondingly lower emissions and output. At these lower levels of coal consumption, emission reductions appear more attainable without substantially compromising electricity generation. In addition, power plants with the lowest MAC are located above the 95th quantile, whereas those with the highest MAC fall below the 65th quantile---between the 50th and 65th quantiles for JD, and below the 35th quantile for WGD.

%

\section{Monte Carlo simulation}\label{sec:monte}

To investigate the finite-sample performance of sign-constrained CNLS and CER models under the three emission-generating technologies, we conduct a Monte Carlo study with two distinct scenarios in the absence of noise. The first scenario considers desirable and undesirable outputs separately, while the second scenario integrates them within a unified data-generating process (DGP). By comparing the results in these two scenarios, we can assess the sensitivity of the sign-constrained CNLS and CER methods to the underlying assumptions about production technology.

\subsection{Setup}

The first scenario inherits the conventional BP framework, in which desirable and undesirable outputs are modeled through separate production functions \citep[see, e.g.,][]{Hampf2018, Rodseth2023, Guillen2025}. This specification reflects the traditional view of BP technology, where a Cobb–Douglas production function captures electricity generation, and emissions are assumed to be nearly linearly dependent on fuel consumption. Specifically, consider a general nonparametric production model with observations $\{{x_i^N,x_i^P,b_i,y_i}\}_{i=1}^I$
$$y_i=f(x_i)\times \exp({-u}_{i1})=(x_{i1})^{0.3}({x_{i2})}^{0.3}(x_{i3})^{0.3}\times \exp(-u_{i1}),$$ 
$$b_i=g(x_i^P)\times \exp({-u}_{i2})=0.09404\ (x_{i3})\times \exp(-u_{i2}).$$
where the non-emission-generating inputs (i.e., $x_{i1}$ and $x_{i2}$) and emission-generating input (i.e., $x_{i3}$) are randomly drawn from uniform distribution $x_{im} \sim U[5, 15]$. Both inefficiencies $u_{i1}$ and $u_{i2}$ are independently generated from the positive normal distributions $u_{i1} \sim N^+(0, \sigma^2)$ and $u_{i2} \sim N^+(0,\sigma^2)$, where $\sigma=\{0.3, 0.8, 1.3\}$. 

The second scenario considers that the desirable output ($y$) is a function of both emission-generating and non-emission-generating inputs, as well as the undesirable output, represented by equation $y_i=f(x_i,b_i)$. Specifically, 
$$y_i=f(x_i,b_i)\exp({-u}_i) =(x_{i1})^{0.3}(x_{i2})^{0.3} (x_{i3}{-0.12 b_i)}^{0.3}\exp(-u_i),$$ 
where the inefficiency is randomly drawn from the positive normal distributions $u_i \sim N^+(0, 1.3)$. The negative coefficient of the undesirable output in the DGP confirms that the production of desirable output decreases as the amount of undesirable output increases. This specification is consistent with the simultaneous equation model with the treatment of \citet{Lai2020}.

In Scenario 2, we employ the Cobb–Douglas production function to set up the DGP, due to its well-established theoretical foundation and its suitability for modeling a unified production framework. The Cobb-Douglas specification facilitates the incorporation of emissions via input adjustments, offering a mechanistic interpretation of how emission levels influence desirable output. This formulation enables a direct link between emissions and the effectiveness of emission-generating inputs, aligning closely with our objective to investigate this relationship. Despite the assumption of a constant elasticity of substitution, the Cobb–Douglas function remains an analytically tractable and empirically flexible choice for the present analysis.

To examine the performance of CER models under three emission-generating technologies, the true nonparametric quantile function $Q_y({\tau}|x)$ is defined as  
$$Q_y({\tau}|x)=f(x_i)\times F_{u_{i1}}^{-1}(\tau),$$
where $F_{u_{i1}}^{-1}$ denotes the inverse cumulative normal distribution function for the empirical quantile of the simulated error term $-u_{i1}$. It represents the probability that $-u_{i1}$ takes a value less than or equal to a specified point.

The calculation of the root mean squared error (RMSE) differs fundamentally between the estimation of a production function and that of a quantile production function, due to the different estimation targets involved. In the case of the production function, RMSE captures the average deviation between the estimated function $\left(f_{ir}^{\text{est}} = f_{ir}^{\text{CNLS}}\right)$ and the true production function$f_i$, averaged over $R$ simulation replications. For CER, by contrast, RMSE measures the average deviation between the estimated $\tau$-th conditional quantile function $\left(\hat{Q}_{y_{ir}}(\tau \mid x_i)\right)$ and the true $\tau$-th conditional quantile function$\left(Q_{y_{ir}}(\tau \mid x_i)\right)$, again averaged over $R$ replications. The respective RMSE formulas are as follows:
$$\text{Pro-RMSE}=\frac{1}{R}\sum_{r}{(\frac{1}{I}\sum_{i}{({f_{ir}^{est}-f_i)}^2)}^{0.5}}$$ 
$$\text{Exp-RMSE}=\frac{1}{R}\sum_{r}{(\frac{1}{I}\sum_{i}{({{\hat{Q}}_{y_{ir}}(\tau|x_i)-Q_{y_{ir}}(\tau|x_i))}^2)}^{0.5}}$$

The RMSE serves as a valid performance metric in both contexts; however, its interpretation differs according to the estimation objective. In the case of the sign-constrained CNLS model, RMSE (hereafter referred to as Pro-RMSE) assesses the model's ability to recover the central tendency (i.e., the conditional mean) of the DGP. In contrast, for the CER model, RMSE (denoted as Exp-RMSE) measures the accuracy in estimating a specific quantile of the conditional distribution. It is important to note that RMSE values are always non-negative; thus, smaller values indicate greater estimation accuracy.

\subsection{Simulation results}

In this paper, we set $I = 100$ and $R = 100$, meaning that each scenario consists of 100 randomly generated observations and is replicated 100 times. Tables~\ref{tab:tab5} and \ref{tab:tab6} report the RMSE-based performance of two estimators (sign-constrained CNLS and CER) across the three emission-generating technologies (BP, JD, and WGD), corresponding to Scenario 1 and Scenario 2, respectively.

As shown in Table~\ref{tab:tab5}, the RMSE values associated with the full frontier estimators are substantially higher than those for the quantile frontier estimators. This indicates that recovering the full frontier is inherently more difficult for the sign-constrained CNLS model compared to estimating specific points on the conditional distribution. Furthermore, for the sign-constrained CNLS (across all technologies) and for CER (particularly with WGD and JD technologies), a decrease in RMSE is observed as $\sigma$ increases. This implies that for these estimators and technologies in Scenario 1, a greater dispersion in inefficiency actually leads to a more accurate estimation. Indeed, the multiplicative form of inefficiency, particularly with nonparametric estimators, can lead to complex interactions where a wider spread of inefficiency (higher $\sigma$) provides a better ``signal'' for the estimator to identify the true frontier, thereby reducing bias and overall RMSE.
\begin{table}[H]
  \centering
  \caption{Performance comparison of sign-constrained CNLS and CER models under Scenario 1.}
    \begin{tabular}{lcccc}
    \toprule
          & \multicolumn{1}{l}{\textbf{Technology}} & \textbf{RMSE ($\sigma=0.3$)} & \textbf{RMSE ($\sigma=0.8$)} & \textbf{RMSE ($\sigma=1.3$)} \\
    \midrule
    CNLS          & BP    & 40994.67 & 3823.63 & 33061.09 \\
    CER ($\tau=0.65$)  & \multirow{3}[0]{*}{BP} & 3111.59 & 2866.05 & 2326.92 \\
    CER ($\tau=0.80$)  &       & 2962.75   & 2832.57 &  2456.74 \\
    CER  ($\tau=0.95$) &       &  1981.85 & 1650.24 & 1101.77  \\
    \addlinespace
    CNLS  & JD    & 14118.14 & 7931.92 & 5807.10 \\
    CER ($\tau=0.65$)  & \multirow{3}[0]{*}{JD} &  8787.65 & 8688.77 & 8214.33\\
    CER ($\tau=0.80$)  &       & 6202.42 & 7589.42 & 7745.26 \\
    CER  ($\tau=0.95$) &       & 6146.84 & 6077.13 & 5943.63 \\
    \addlinespace
    CNLS  & WGD   & 5569.97 & 4426.02 & 3438.16 \\
    CER ($\tau=0.65$)  & \multirow{3}[1]{*}{WGD} & 2404.15 & 2026.69 & 1472.51 \\
    CER ($\tau=0.80$)  &       & 2408.96 & 1968.00 &  1624.06 \\
    CER  ($\tau=0.95$) &       & 2350.08 & 2009.73 &  1712.23\\
    \bottomrule
    \end{tabular}%
  \label{tab:tab5}%
\end{table}%

Similarly, the results reported in Table~\ref{tab:tab6} reveal that the CER estimator consistently achieves lower RMSE values for the quantile estimator compared to the sign-constrained CNLS estimator. This highlights the advantage of CER in accurately capturing specific segments of the production distribution. Among the three technologies, WGD clearly outperforms both BP and JD in all levels of inefficiency variation and for both estimators. It consistently yields lower RMSE values, demonstrating superior robustness and accuracy in estimating frontiers or quantiles, even in the presence of undesirable outputs that affect production. The most striking and consistent finding in Scenario 2 is the decrease in RMSE as the standard deviation of inefficiency $(\sigma)$ increases. This applies to CER across all technologies and to CNLS for JD and WGD. This implies that for these methods and technologies, a greater spread or heterogeneity in inefficiency levels actually facilitates more accurate estimation of the true frontier or quantiles in Scenario 2. This could be due to the specific interaction between the model of undesirable output $b_i$ within the production function and the estimation algorithms, where higher variation $u_i$ might provide a clearer signal for distinguishing between the true frontier and the observed outputs.
\begin{table}[H]
  \centering
  \caption{Performance comparison of sign-constrained CNLS and CER models under Scenario 2.}
    \begin{tabular}{lcccc}
    \toprule
          & \multicolumn{1}{l}{\textbf{Technology}} & \textbf{RMSE ($\sigma=0.3$)} & \textbf{RMSE ($\sigma=0.8$)} & \textbf{RMSE ($\sigma=1.3$)} \\
    \midrule
    CNLS & BP    & 66091.46 &  38475.30 & 34134.44 \\
    CER ($\tau=0.65$) & \multirow{3}[0]{*}{BP} & 24321.42 & 15082.81 & 3709.99 \\
    CER ($\tau=0.80$) &       & 17723.70 & 9476.69 & 5022.94 \\
    CER  ($\tau=0.95$) &       & 6407.21 & 5688.78 & 2201.61 \\
    CNLS & JD    & 7201.13 & 3461.68 & 2339.78 \\
    CER ($\tau=0.65$) & \multirow{3}[0]{*}{JD} & 6421.69 & 5445.72 & 4720.64 \\
    CER ($\tau=0.80$) &       & 6090.81 & 4378.22 & 4143.79 \\
    CER  ($\tau=0.95$) &       & 4174.69 & 3379.542 & 2991.18 \\
    CNLS & WGD   & 3418.50 & 313.14 & 22.75 \\
    CER ($\tau=0.65$) & \multirow{3}[1]{*}{WGD} & 44.48 & 21.45 & 15.22 \\
    CER ($\tau=0.80$) &       & 44.31 & 17.73 & 14.51 \\
    CER ($\tau=0.95$) &       & 36.54 & 15.63 & 14.11 \\
    \bottomrule
    \end{tabular}%
  \label{tab:tab6}%
\end{table}%

A significant observation from comparing Table 5 (Scenario 1) and Table 6 (Scenario 2) is the substantially lower RMSE values achieved in Scenario 2, particularly evident with the WGD technology. This improved accuracy is likely attributable to the unified framework in Scenario 2, which explicitly models the joint production of desirable and undesirable outputs and the detrimental effect of the latter. This comprehensive approach appears to enable estimators, especially WGD (which is designed for joint production contexts), to more accurately represent the true underlying production process and its associated inefficiencies. Furthermore, the counterintuitive trend of RMSE decreasing with increasing $\sigma$ is present in both scenarios, but it is more pronounced and consistent in Scenario 2. This suggests that the unified framework, where the undesirable output directly influences the production function, provides a clearer ``signal'' for the estimators when inefficiency is more dispersed. This enhanced signal may allow the estimators to better disentangle the true frontier from the combined effects of inputs, undesirable outputs, and varying inefficiency. From a technological perspective, the consistent superiority of WGD across both scenarios is particularly noteworthy given the distinct ways the undesirable output is handled. This confirms the robustness and suitability of WGD for modeling production processes that generate undesirable outputs, whether they are formulated separately or within a unified framework.

%

\section{Conclusions}\label{sec:concl}

Coal-fired power plants play a pivotal role in U.S. electricity generation, making this industry central to achieving national carbon reduction goals and facilitating the broader transition away from fossil fuels in other sectors. Establishing effective and economically viable emission reduction targets and policies requires a clear understanding of the costs associated with mitigating CO$_2$ emissions. This study addresses this need by estimating the MAC of CO$_2$ emission reduction for U.S. coal-fired power plants, using plant-level CO$_2$ emission data in conjunction with corresponding financial records.

We start by reviewing several BP, JD, and WGD approaches for modeling emission-generating technologies. To estimate the MAC of CO$_2$ emissions, we apply both full and quantile frontier estimators. The key methodological distinction lies in the treatment of inefficiency: the full frontier estimator inherently disregards inefficiency in shadow price calculation, whereas the quantile frontier estimator employs nearest quantiles, explicitly taking inefficiency into account and producing more precise and robust results. Instead of following the conventional MAC measure, which assesses the loss of desirable output per unit of emission reduction under fixed inputs and may not identify the least-cost strategy, we illustrate potential firm-level abatement strategies. Consistent with EPA data showing a typical fuel mix of approximately 92.59\% coal, 6.34\% natural gas, and 0.73\% oil for bituminous coal-fired power plants, our analysis recognizes the limited role of auxiliary fuels. Given that the EPA reports only total fuel consumption, we focus on reducing production and lowering fuel consumption as the dominant abatement strategies, while excluding fuel switching from consideration.

The findings show that MACs under BP technology are substantially higher than those under the other two technologies. The optimality of the cost-effective abatement strategy varies markedly across technologies. Under BP technology, about 40\% of plants favor fuel reduction, whereas only a negligible proportion do so under JD technology and almost none under WGD technology. As a result, downscaling production emerges as a more economically viable strategy. The analysis of plant characteristics indicates that, for BP technology, economies of scale are an important driver of cost-effectiveness: plants operating at a high scale record the lowest MAC, while those at a small scale have the highest. In contrast, for both JD and WGD technologies, the lowest MACs are associated with very small-scale operations, suggesting that emission reductions are more readily achieved at reduced production levels. This difference underscores the technology-specific nature of cost-effective emission abatement strategies. Within a tradable permit system, plants with low MAC are encouraged to sell permits, thereby contributing to cost-effective overall emission reductions, while plants with high MAC are encouraged to purchase permits at a mutually agreed market price for emissions, allowing flexibility in meeting regulatory requirements and promoting broader abatement efforts.

We further employ Monte Carlo simulations to evaluate and compare the performance of emission-generating technologies under two distinct DGPs. The results indicate that the quantile frontier estimation (CER) generally produces more accurate parameter estimates than the conventional full frontier method (sign-constrained CNLS) in both scenarios. The WGD technology consistently achieves the highest accuracy across both scenarios and estimation tasks. Although the quantile estimation generally performs better than the full frontier method, the simulations reveal a recurring and counterintuitive pattern: greater variation in inefficiency often leads to improved estimation accuracy, as measured by lower RMSE, for the more robust technologies and estimators. This effect is especially pronounced in Scenario 2, where undesirable outputs are explicitly modeled within a unified framework.

For future research, the evaluation of MAC will not be limited solely to CO$_2$ emissions. Determining the MAC associated with other commonly observed pollutants, such as SO$_2$ and NOx, would warrant further investigation, potentially involving the development of two-stage network modeling structures to analyze end-of-pipe abatement technologies, such as flue gas desulfurization for SO$_2$ emissions. Future work can also develop a comprehensive framework that incorporates input-switching and the integration of renewable energy sources as potential strategies for reducing emissions in electricity generation. This will entail examining the adoption of alternative, lower-emission fuels and incorporating renewable energy sources, such as solar or wind power, into production practices. Such a framework can adapt DDF to explicitly account for changes in input mixes and energy sources, thereby enabling a rigorous assessment of their impact on both productivity and environmental performance. Finally, the empirical findings of this paper can provide a broader perspective on emission-generating technologies within the year 2022. Future research applying this analytical approach over a longer temporal scale would be particularly beneficial for the formulation of comprehensive and impactful environmental regulations.

%

\baselineskip 12pt
\bibliographystyle{econ-econometrica}
\bibliography{References.bib} 

%

\newpage
\baselineskip 20pt
\section*{Appendix}\label{sec:app}

\renewcommand{\thesubsection}{A\arabic{subsection}}
\renewcommand{\theequation} {A\arabic{equation}}
\setcounter{equation}{0}

\subsection{Proof of Proposition 1} \label{App:proof1}
To derive the shadow prices under the BP technology, we formulate the dual of the linear programming problem~\eqref{eq:eq5} as follows:
\vspace{-0.35cm}
\begin{alignat}{2}
	\mathop{\min }\, \quad & (-\gamma_i^\prime y_{oj}+\beta_i^\prime x_{om}^N+\eta_i^\prime x_{om}^p+\omega_i^\prime b_{ok}+\alpha_i-{\bar{\alpha}}_i) &{\quad}& \label{eq:a1}\\
	\mbox{s.t.}\quad
    & -\gamma_i^\prime y_{ij}+\beta_i^\prime x_{im}^N+(\eta_i^\prime+{\bar{\eta}}_i^\prime{)x}_{im}^p+\alpha_i\geq0 &{}& \forall i\notag \\
	& \omega_i^\prime b_{ik}-({\bar{\eta}}_i^\prime x_{im}^p+{\bar{\alpha}}_i)\geq0  &{}& \forall i, k, m \notag \\
    & \gamma_i^\prime g_y \geq 0.5  &{}& \forall i \notag \\
	& \omega_i^\prime g_b \geq 0.5 &{}& \forall i \notag \\ 
    & \eta_i^\prime g_x \geq 0  &{}& \forall i \notag \\
    & \gamma_i, \beta_i,\eta_i,  \omega_i \geq 0  &{}& \forall i \notag  \\
    & \alpha_i, \bar{\alpha}_i, \bar{\eta}_i \, \text{free}  &{}& \forall i \notag  
\end{alignat}

We then introduce $\varepsilon_i^1$ , $\varepsilon_i^2$ and $\varepsilon_i$ as auxiliary variables as follows:

$\varepsilon_i^1=[\alpha_i+\beta_i^\prime x_{im}^N+(\eta_i^\prime+{\bar{\eta}}_i^\prime{)x}_{im}^p]-\gamma_i^\prime y_{ij}; \varepsilon_i^1 \geq0,$

$\varepsilon_i^2=\omega_i^\prime b_{ik}-[{\bar{\alpha}}_i+{\bar{\eta}}_i^\prime x_{im}^p]; \varepsilon_i^2\geq0,$

$\varepsilon_i=\varepsilon_i^1+\varepsilon_i^2=(\alpha_i-{\bar{\alpha}}_i)+\beta_i^\prime x_{im}^N+\eta_i^\prime x_{im}^p+\omega_i^\prime b_{ik}-\gamma_i^\prime y_{ij}\geq0;\ \varepsilon_i\geq0.$

Since the inefficient firm does not affect the shape of the sub-technologies, we introduce $\varepsilon_i^1 \geq 0$ and $\varepsilon_i^2 \geq 0$ on the left- or right-hand side of the constraints, respectively. This adjustment induces the concavity of the supporting hyperplanes for the two sub-technologies:

$\alpha_h{+\beta}_h^\prime x_{im}^N+(\eta_h^\prime+{\bar{\eta}}_h^\prime{)x}_{im}^p-\gamma_h^\prime y_{ij}\le\alpha_i{+\beta}_i^\prime x_{im}^N+(\eta_i^\prime+{\bar{\eta}}_i^\prime{)x}_{im}^p-\gamma_i^\prime y_{ij}$

${\omega_h^\prime b_{ik}-(\bar{\alpha}}_h+{\bar{\eta}}_h^\prime x_{im}^p)\le\ \omega_i^\prime b_{ik}-({\bar{\alpha}}_i+{\bar{\eta}}_i^\prime x_{im}^p)$

This leads to the sign-constrained CNLS model~\eqref{eq:eq6} under BP technology, which accounts for the trade-off between desirable and undesirable outputs. 
\end{document}